\newtheorem{Thm}{Theorem}
\newtheorem{Lem}[Thm]{Lemma}
\newtheorem{Cor}[Thm]{Corollary}
\theoremstyle{definition}
\newtheorem{Rem}[Thm]{Remark}
\newcommand{\inn}[2]{\left\langle{#1}|{#2}\right\rangle}
\newcommand{\Tr}{\mathop{\mathrm{Tr}}\nolimits}
\begin{document}

\title{The necessary and sufficient conditions when global and local fidelities are equal}

\author{Seong-Kun Kim} \email{kimseong@kangwon.ac.kr}
\affiliation{
Department of Liberal Studies, Kangwon National University, Samcheok 25913, Republic of Korea}

\author{Yonghae Lee} \email{yonghaelee@kangwon.ac.kr}
\affiliation{
Department of Liberal Studies, Kangwon National University, Samcheok 25913, Republic of Korea}

\pacs{
03.67.Hk, % Quantum communication
89.70.Cf, % Entropy and other measures of information
03.67.Mn % Entanglement production, characterization and manipulation
}
\date{\today}

%%%%%%%%%%%%%%%%%%%%%%%%%%%%%%%%%%%%%%%%%%%%%%%%%%%%%%%%%%%%%%%%%%%%%%%%%%%%%%%%%%%%%%%%
%%%
%%% Abstract
%%%
%%%%%%%%%%%%%%%%%%%%%%%%%%%%%%%%%%%%%%%%%%%%%%%%%%%%%%%%%%%%%%%%%%%%%%%%%%%%%%%%%%%%%%%%
\begin{abstract}
In the field of quantum information theory, the concept of quantum fidelity is employed to quantify the similarity between two quantum states.
It has been observed that the fidelity between two states describing a bipartite quantum system $A\otimes B$ is always less than or equal to the quantum fidelity between the states in subsystem $A$ alone. 
While this fidelity inequality is well understood, determining the conditions under which the inequality becomes an equality remains an open question.
In this paper, we present the necessary and sufficient conditions for the equality of fidelities between a bipartite system $A\otimes B$ and subsystem $A$, considering pure quantum states.
Moreover, we provide explicit representations of quantum states that satisfy the fidelity equality, based on our derived results.
\end{abstract}

\maketitle

%%%%%%%%%%%%%%%%%%%%%%%%%%%%%%%%%%%%%%%%%%%%%%%%%%%%%%%%%%%%%%%%%%%%%%%%%%%%%%%%%%%%%%%%
%%%
%%% Introduction
%%%
%%%%%%%%%%%%%%%%%%%%%%%%%%%%%%%%%%%%%%%%%%%%%%%%%%%%%%%%%%%%%%%%%%%%%%%%%%%%%%%%%%%%%%%%
%%%%%%%%%%%%%%%%%%%%%%%%%%%%%%%%%%%%%%%%%%%%%%%%%%%%%%%%%%%%%%%%%%%%
%%%
%%% Introduction
%%%
%%%%%%%%%%%%%%%%%%%%%%%%%%%%%%%%%%%%%%%%%%%%%%%%%%%%%%%%%%%%%%%%%%%%
\section{Introduction}

Quantum fidelity~\cite{Uhlmann1976,Jozsa1994} is a fundamental and indispensable tool in quantum information theory for quantifying the closeness between two quantum states that describe a quantum system.
Among its various applications, quantum fidelity plays a crucial role in evaluating the success of key quantum communication tasks within quantum Shannon theory, including quantum teleportation~\cite{Bennett1993}, quantum state merging~\cite{Horodecki2005,Horodecki2007}, and quantum state redistribution~\cite{Devetak2008, Yard2009}.

To illustrate the importance of quantum fidelity, we focus on the task of quantum state merging.
In this task, two users, Alice and Bob, initially possess separate parts $A$ and $B$ of a shared quantum state $\rho^{AB}$.
By employing local operations and classical communication assisted by shared entanglement, their objective is to merge Alice's quantum state with Bob's, resulting in the target state $\rho^{B'B}$, where $B'$ corresponds to Bob's quantum system.
Upon completion of the merging process, how can they ascertain the closeness of the resulting state to the desired target state?
Without the aid of the quantum fidelity, it would be impossible to compare and assess the similarity between these states.

In this study, we consider the following inequality~\cite{Wilde2013}:
\begin{equation} \label{eq:FR} % Fidelity Relation
F(\rho^{AB},\sigma^{AB}) \le F(\rho^{A},\sigma^{A}),
\end{equation}
where $\rho^{AB}$ and $\sigma^{AB}$ represent the quantum states of the bipartite system $AB$, and $\rho^{A}$ and $\sigma^{A}$ represent the reduced states of $\rho^{AB}$ and $\sigma^{AB}$ corresponding to the quantum system $A$.
This inequality demonstrates that for any given pair of bipartite quantum states, the quantum fidelity on the bipartite quantum system $AB$ is always less than or equal to the quantum fidelity on the local quantum systems $A$.
To provide a simple illustration, let us examine the scenario of two EPR pairs~\cite{Einstein1935}:
\begin{equation}
\ket{\phi^{\pm}}^{AB}=\frac{1}{\sqrt{2}}(\ket{00}^{AB}\pm \ket{11}^{AB}),
\end{equation}
where $\ket{0}$ and $\ket{1}$ are the computational basis of a two-dimensional quantum system.
In this context, the quantum fidelity between $\phi^+$ and $\phi^-$ is found to be zero.
However, when we evaluate their fidelity on the local quantum system $A$, it becomes one.
This intriguing observation implies that the quantum states $\phi^+$ and $\phi^-$ are indistinguishable on the local quantum system $A$, indicating complete identity.
However, on the bipartite quantum system $AB$, they exhibit complete distinctness. 

The inequality Eq.~(\ref{eq:FR}) is easy to understand, as discussed earlier.
However, determining the conditions under which the fidelities in Eq.~(\ref{eq:FR}) become equal is difficult.
This study focuses on overcoming this limitation by considering pure bipartite quantum states $\ket{\psi}^{AB}$ and $\ket{\phi}^{AB}$.
We aim to investigate the conditions for fidelity inequality as stated in Eq.~(\ref{eq:FR}) and provide explicit representations of pure bipartite quantum states that satisfy {these conditions.}

The remainder of this paper is organized as follows:
In Sec.~\ref{sec:preliminary}, we introduce the definitions of global and local fidelities, along with the assumptions and lemmas that form the foundation of our main results.
Sec.~\ref{sec:calculation} presents a comprehensive calculation of the global and local fidelities.
In Sec.~\ref{sec:ECs}, we present the conditions that establish the equivalence for fidelity equality.
Sec.~\ref{sec:Reps} is devoted to presenting specific forms of pure bipartite quantum states that fulfill these equivalent conditions.
Finally, in Sec.~\ref{sec:conclusion}, we discuss our findings, their implications, and outline potential avenues for future research.

%%%%%%%%%%%%%%%%%%%%%%%%%%%%%%%%%%%%%%%%%%%%%%%%%%%%%%%%%%%%%%%%%%%%
%%%
%%% Definitions, assumptions, and Lemmas
%%%
%%%%%%%%%%%%%%%%%%%%%%%%%%%%%%%%%%%%%%%%%%%%%%%%%%%%%%%%%%%%%%%%%%%%
\section{Definitions, Assumptions, and Lemmas} \label{sec:preliminary}
 
In this section, we provide the definitions, assumptions, and lemmas that are employed throughout this work.

To begin, we consider finite-dimensional Hilbert spaces $\mathcal{H}$.
The notation $\mathcal{H}^X$ denotes a Hilbert space representing a quantum system $X$.
The tensor product $\mathcal{H}^A\otimes \mathcal{H}^B$ signifies a composite quantum system comprising two quantum systems $A$ and $B$, which can be denoted as $A\otimes B$ or simply $AB$.
The dimension of the Hilbert space $\mathcal{H}^X$, denoted as $\mathrm{dim} X$, corresponds to the dimension of the quantum system $X$.

Let $\mathcal{D}(\mathcal{H})$ denote the set of density operators on a Hilbert space $\mathcal{H}$.
In other words, $\mathcal{D}(\mathcal{H})=\{\rho\in\mathcal{L}(\mathcal{H}) : \rho\ge0, \mathrm{Tr}[\rho]=1\}$, where $\mathcal{L}(\mathcal{H})$ denotes the set of all linear operators on $\mathcal{H}$.
The elements within $\mathcal{D}(\mathcal{H})$ are referred to as quantum states.
If a quantum state $\rho$ can be expressed as a rank-1 projector, i.e., it can be represented as
\begin{equation}
\psi :=\ket{\psi}\bra{\psi},
\end{equation}
where $\ket{\psi}$ is a normalized vector in the Hilbert space $\mathcal{H}$, it is referred to as a pure state.
Here, the unit vector $\ket{\psi}$ is also considered a pure quantum state.
Quantum states that are not pure are referred to as mixed states, and they are denoted by $\rho$ or $\sigma$ in this paper.

The trace, $\Tr[\rho]$, of a quantum state $\rho$ operating on a Hilbert space $\mathcal{H}$ is defined as
\begin{equation}
\Tr[\rho]:= \sum_{j} \bra{j}\rho\ket{j},
\end{equation}
where $\{\ket{j}\}$ represents any orthonormal basis of the Hilbert space $\mathcal{H}$.
For a bipartite quantum state $\rho^{AB}$ on a Hilbert space $\mathcal{H}^A\otimes\mathcal{H}^B$,
the partial trace over the Hilbert space $\mathcal{H}^B$ is defined as
\begin{equation}
\Tr_B[\rho^{AB}]:= \sum_{j} \left(I^{A}\otimes\bra{j}^B\right)\rho^{AB}
\left(I^{A}\otimes\ket{j}^B\right),
\end{equation}
where $I^{A}$ denotes the identity matrix on the quantum system $A$, and $\{\ket{j}^B\}$ represents any orthonormal basis of the Hilbert space $\mathcal{H}^B$.
In this scenario, the quantum state {$\rho^A:=\Tr_B[\rho^{AB}]$} obtained on the Hilbert space $\mathcal{H}^A$ is referred to as the reduced quantum state of $\rho^{AB}$. 

In this study, we focus on investigating the quantum fidelity~\cite{Wilde2013} between two quantum states $\rho$ and $\sigma$ that represent the same quantum system.
The quantum fidelity is defined as
\begin{equation}
F(\rho,\sigma)=
\left\| \sqrt{\rho}\sqrt{\sigma} \right\|_{1}^{2} =\left( \Tr \sqrt{\sqrt{\rho} \sigma \sqrt{\rho}} \right)^2.
\end{equation}
In particular, when considering two pure quantum states $\ket{\psi}$ and $\ket{\phi}$, the quantum fidelity can be straightforwardly calculated as $F(\psi,\phi)=|\inn{\psi}{\phi}|^{{2}}$.
We also investigate two pure quantum states $\ket{\psi}^{AB}$ and $\ket{\phi}^{AB}$ on the bipartite quantum system $AB$, and with the assumption that $\mathrm{dim} A=2$ and $\mathrm{dim} B\ge2$.
For convenience, we use the notations
\begin{eqnarray}
F^{AB}&:=&F(\ket{\psi}^{AB}, \ket{\phi}^{AB}), \\
F^{A}&:=&F(\rho_{\psi}^{A}, \rho_{\phi}^{A}),
\end{eqnarray}
where $\rho_{\psi}^{A}$ and $\rho_{\phi}^{A}$ represent the reduced quantum states of pure bipartite quantum states $\ket{\psi}^{AB}$ and $\ket{\phi}^{AB}$, respectively.
When referring to the given quantum states $\ket{\psi}^{AB}$ and $\ket{\phi}^{AB}$, we use the terms $F^{AB}$ and $F^{A}$ to present the \emph{global} fidelity and the \emph{local} fidelity, respectively.
Thus, the fidelity inequality in Eq.~(\ref{eq:FR}) can be expressed as
\begin{equation}
F^{AB} \le F^{A}.
\end{equation}

Finally, we introduce two lemmas that will be used in the subsequent sections.

\begin{Lem} \label{lem:twolemma}
For any two complex numbers $\alpha$ and $\beta$, we have
\begin{eqnarray}
\mathrm{Re}(\alpha\beta^*)= |\alpha\beta|
{\implies} \beta= k\alpha, \\
|\alpha|-|\beta|=|\alpha-\beta|
\implies \beta= p\alpha, \label{eq:onlyone}
\end{eqnarray}
where $\beta^*$ denotes the complex conjugate of $\beta$, $k$ is a real number, and $p$ is a real and non-negative value.
\end{Lem}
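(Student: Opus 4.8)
The plan is to recognize both implications as the saturation (equality) conditions of two elementary inequalities for complex numbers, and to extract the claimed proportionality from these equality conditions. Throughout I would assume $\alpha\neq 0$, since the degenerate case $\alpha=0$ is either trivial or vacuous for the intended application, where $\alpha$ arises as a nonzero Schmidt coefficient.

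For the first implication, I would set $z:=\alpha\beta^*$ and observe that $|\alpha\beta|=|\alpha||\beta|=|\alpha\beta^*|=|z|$, so the hypothesis reads $\mathrm{Re}(z)=|z|$. Writing $z=x+iy$ with $x,y\in\Real$, the identity $x=\sqrt{x^2+y^2}$ forces $y=0$ and $x\ge 0$; hence $z=\alpha\beta^*$ is a non-negative real number. Solving for $\beta$ then gives $\beta = z\,\alpha/|\alpha|^2$, i.e. $\beta=k\alpha$ with $k=z/|\alpha|^2\ge 0$ real, as claimed.

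For the second implication, I would first rewrite the hypothesis in the symmetric form $|\alpha|=|\alpha-\beta|+|\beta|$, which is precisely the saturation of the triangle inequality $|u+v|\le |u|+|v|$ for the pair $u:=\alpha-\beta$ and $v:=\beta$. Squaring, equality holds if and only if $\mathrm{Re}(u v^*)=|uv|$, which is exactly the hypothesis of the first implication applied to $(u,v)$. The first implication then yields $v=k u$ for some real $k\ge 0$, that is $\beta=k(\alpha-\beta)$, and rearranging gives $\beta=\tfrac{k}{1+k}\,\alpha=p\alpha$ with $p=\tfrac{k}{1+k}\in[0,1)$ non-negative. Thus the second statement reduces cleanly to the first.

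The individual steps are short, so the main care needed is in the degenerate cases rather than in any hard estimate. Specifically, I would treat $\beta=0$ and $\alpha=\beta$ (equivalently $u=\alpha-\beta=0$) separately, since there the reduction to the first implication is vacuous; in both cases the conclusion $\beta=p\alpha$ holds directly, with $p=0$ and $p=1$ respectively. Verifying that the extracted constants are real, and that $p$ is non-negative, is the one place where the sign information coming from $\mathrm{Re}(z)=|z|\ge 0$ must be tracked explicitly.
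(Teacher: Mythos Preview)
Your argument is correct. Both implications are handled cleanly, and your explicit treatment of the degenerate cases ($\alpha=0$, $\beta=0$, $\alpha=\beta$) plugs exactly the gaps that a coordinate computation tends to leave implicit. You also obtain the slightly sharper conclusion $k\ge 0$ in the first implication, which the lemma's statement does not claim but which is in fact true.

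The route, however, is genuinely different from the paper's. The paper proves each part by a direct coordinate computation: for the first implication it writes $\alpha=a+ib$, $\beta=c+id$, expands both sides, and reduces to $(ad-bc)^2=0$; for the second it passes to polar form $\alpha=r_1e^{i\theta_1}$, $\beta=r_2e^{i\theta_2}$ and shows the hypothesis forces $\cos(\theta_2-\theta_1)=1$. The two parts are treated independently. Your approach instead identifies each hypothesis as the equality case of a standard inequality ($\mathrm{Re}(z)\le|z|$ and the triangle inequality) and, more interestingly, reduces the second implication to the first via $u=\alpha-\beta$, $v=\beta$. This buys you a unified, coordinate-free argument and the sharper sign information on $k$; the paper's version is more elementary in the sense of requiring no prior inequalities, at the cost of some implicit nondegeneracy assumptions (e.g.\ dividing by $b$).
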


\begin{proof}
(i) Assume that $\mathrm{Re}(\alpha\beta^*)= |\alpha\beta|$ holds for any two complex numbers $\alpha$ and $\beta$.
Given that $\alpha$ and $\beta$ are complex, they can be expressed as $\alpha=a+ib$ and $\beta=c+id$ using some real numbers $a$, $b$, $c$, and $d$.
Notably,
\begin{eqnarray}
\mathrm{Re}(\alpha\beta^*)
&=&
\mathrm{Re}((a+ib)(c-id))
=\mathrm{Re}((ac+bd)+i(bc-ad))
= ac +bd, \\
|\alpha\beta|
&=&
|(a+ib)(c+id)|
=|(ac-bd)+i(bc+ad)|
=\sqrt{(ac-bd)^2+(bc+ad)^2}.
\end{eqnarray}
Consequently, the assumption implies that $(ad-bc)^2=0$; thus, $ad = bc$.
Therefore,
\begin{equation}
\beta
=c+id
=\frac{ad}{b}+id
=\frac{d}{b}(a+ib)
=k \alpha,
\end{equation}
where $k=d/b$.

(ii) Assume that $|\alpha|-|\beta|=|\alpha-\beta|$ holds for any two complex numbers $\alpha$ and $\beta$.
AS $\alpha$ and $\beta$ are complex, they can be represented as $\alpha=r_1 e^{i\theta_1}$ and $\beta=r_2 e^{i\theta_2}$ based on some non-negative real numbers $r_1$,$r_2$, $\theta_1$, and $\theta_2$. Without loss of generality, we may assume that $r_2\le r_1$.
Observe that $|\alpha|= r_1$, $|\beta|= r_2$, and

\begin{eqnarray}
|\alpha-\beta|
&=& |r_1 e^{i\theta_1}-r_2 e^{i\theta_2}|
= |e^{i\theta_1}||r_1-r_2 e^{i(\theta_2-\theta_1)}|
= |(r_1-r_2\cos(\theta_2-\theta_1))-r_2 i \sin(\theta_2-\theta_1)|.
\end{eqnarray}
Therefore, $|\alpha|-|\beta|=|\alpha-\beta|$ implies that
{$\cos(\theta_2-\theta_1)=1$}; thus, $\theta_2=\theta_1$.
Consequently, we have
\begin{equation}
\beta
=r_2 e^{i\theta_2}
=\frac{r_2}{r_1}\left(r_1 e^{i\theta_1}\right)
=p\alpha,
\end{equation}
where $p=r_2/r_1\ge0$.
\end{proof}

\begin{Lem} \label{lem:equality}
For any two {vectors} $\ket{\eta}$ and $\ket{\zeta}$ represented as
\begin{equation}
\ket{\eta}
=\sum_{j=0}^{d-1} c_{0j}\ket{j}
\quad\mathrm{and}\quad
\ket{\zeta}
=\sum_{j=0}^{d-1} c_{1j}\ket{j},
\end{equation}
we have the equality
\begin{equation}
\sum_{\substack{ j,l=0 \\ j>l }}^{d-1}
\left| c_{0j}c_{1l} - c_{0l}c_{1j} \right|^2
=
 \left(\sum_{j=0}^{d-1}|c_{0j}|^2\right)\left(\sum_{j=0}^{d-1}|c_{1j}|^2\right)
-\left(\sum_{j=0}^{d-1}c_{1j}^* c_{0j}\right)\left(\sum_{j=0}^{d-1}c_{0j}^* c_{1j}\right),
\end{equation}
where $c_{ij}$ are complex {coefficients},
and $\ket{j}$ indicates the computational basis of a $d$-dimensional Hilbert space.
\end{Lem}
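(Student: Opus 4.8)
The plan is to recognize the claimed equality as the complex form of Lagrange's identity and to establish it by a direct expansion of the right-hand side, keeping careful track of diagonal and off-diagonal index contributions. I would work from the right-hand side toward the left, since the right-hand side is naturally a product of two unrestricted sums that is easy to multiply out, whereas the left-hand side carries the restriction $j>l$ that is more convenient to reconstruct at the end.

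First I would rewrite each factor on the right-hand side as a single sum over the index $0,\dots,d-1$ and multiply them into double sums over a pair of indices $(j,l)$, obtaining
\begin{equation}
\left(\sum_{j}|c_{0j}|^2\right)\left(\sum_{l}|c_{1l}|^2\right)
-\left(\sum_{j}c_{1j}^*c_{0j}\right)\left(\sum_{l}c_{0l}^*c_{1l}\right)
=\sum_{j,l}|c_{0j}|^2|c_{1l}|^2-\sum_{j,l}c_{1j}^*c_{0j}c_{0l}^*c_{1l}.
\end{equation}
Next I would split each double sum into its diagonal part $j=l$ and its off-diagonal part $j\neq l$. Along the diagonal both products contribute exactly $\sum_{j}|c_{0j}|^2|c_{1j}|^2$, so these two contributions cancel, leaving only the off-diagonal sum $\sum_{j\neq l}\bigl(|c_{0j}|^2|c_{1l}|^2-c_{1j}^*c_{0j}c_{0l}^*c_{1l}\bigr)$.

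Finally I would group the surviving off-diagonal terms into the pairs $(j,l)$ and $(l,j)$ with $j>l$. For each such pair the four monomials combine to $|c_{0j}|^2|c_{1l}|^2+|c_{0l}|^2|c_{1j}|^2-c_{1j}^*c_{0j}c_{0l}^*c_{1l}-c_{1l}^*c_{0l}c_{0j}^*c_{1j}$, and expanding $\left|c_{0j}c_{1l}-c_{0l}c_{1j}\right|^2$ shows it equals precisely this expression; summing over $j>l$ then reproduces the left-hand side. There is no conceptual obstacle here, as the entire argument is algebraic; the only point demanding care is the index bookkeeping, namely verifying that the diagonal terms of the two products match so as to cancel, and that the pairing of $(j,l)$ with $(l,j)$ correctly reassembles each squared modulus.
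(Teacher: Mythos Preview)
Your proposal is correct: it is the standard direct proof of the complex Lagrange identity, and the bookkeeping you outline (diagonal cancellation, then symmetric pairing of $(j,l)$ with $(l,j)$) goes through exactly as you say.

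The paper, however, does not argue by direct expansion. Instead it considers the antisymmetric tensor $\ket{\eta}\otimes\ket{\zeta}-\ket{\zeta}\otimes\ket{\eta}$ and computes its squared norm in two ways: expanding in the product basis $\ket{j}\otimes\ket{l}$ gives $2\sum_{j>l}|c_{0j}c_{1l}-c_{0l}c_{1j}|^2$, while expanding via inner products gives $2\bigl(\inn{\eta}{\eta}\inn{\zeta}{\zeta}-\inn{\zeta}{\eta}\inn{\eta}{\zeta}\bigr)$. This is a slicker, basis-independent packaging that makes the cancellation of diagonal terms and the $(j,l)\leftrightarrow(l,j)$ symmetry automatic (they are built into the antisymmetry of the tensor), whereas your approach is more elementary and self-contained, requiring no auxiliary Hilbert space but demanding the explicit index tracking you describe. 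Both arrive at the same identity with comparable effort.
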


\begin{proof}
Consider the norm of the bipartite vector $\ket{\eta}\otimes \ket{\zeta}-\ket{\zeta}\otimes \ket{\eta}$, which is as follows:

%\centering %% If there is a figure in wide page, please release command \centering
\begin{eqnarray}
\left\| \ket{\eta}\otimes \ket{\zeta}-\ket{\zeta}\otimes \ket{\eta} \right\|^2
&=&
\left\| \sum_{j=0}^{d-1}\sum_{l=0}^{d-1}
(c_{0j}c_{1l}-c_{1j}c_{0l})
\ket{j}\otimes \ket{l} \right\|^2 \\
&=&
\sum_{j=0}^{d-1}\sum_{l=0}^{d-1}
\left| c_{0j}c_{1l}-c_{1j}c_{0l} \right|^2 \\
&=&
\sum_{\substack{ j,l=0 \\ j>l }}^{d-1}
\left| c_{0j}c_{1l}-c_{1j}c_{0l} \right|^2
+\sum_{j=1 }^{d-1}
\left| c_{0j}c_{1j}-c_{1j}c_{0j} \right|^2
+\sum_{\substack{ j,l=0 \\ j<l }}^{d-1}
\left| c_{0j}c_{1l}-c_{1j}c_{0l} \right|^2 \\
&=&
2\sum_{\substack{ j,l=0 \\ j>l }}^{d-1}
\left| c_{0j}c_{1l}-c_{1j}c_{0l} \right|^2.
\end{eqnarray}

In addition, the above quantity can be represented as

%\centering %% If there is a figure in wide page, please release command \centering
\begin{eqnarray}
\left\| \ket{\eta}\otimes \ket{\zeta}-\ket{\zeta}\otimes \ket{\eta} \right\|^2
&=&
\left( \bra{\eta}\otimes \bra{\zeta}-\bra{\zeta}\otimes \bra{\eta} \right)
\left( \ket{\eta}\otimes \ket{\zeta}-\ket{\zeta}\otimes \ket{\eta} \right) \\
&=&
 \inn{\eta}{\eta}\inn{\zeta}{\zeta}
-\inn{\eta}{\zeta}\inn{\zeta}{\eta}
-\inn{\zeta}{\eta}\inn{\eta}{\zeta}
+\inn{\zeta}{\zeta}\inn{\eta}{\eta} \\
&=&
2\left(
 \inn{\eta}{\eta}\inn{\zeta}{\zeta}
-\inn{\zeta}{\eta}\inn{\eta}{\zeta}
\right) \\
&=&
2\left[\left(\sum_{j=0}^{d-1}|c_{0j}|^2\right)\left(\sum_{j=0}^{d-1}|c_{1j}|^2\right)
-\left(\sum_{j=0}^{d-1}c_{1j}^* c_{0j}\right)\left(\sum_{j=0}^{d-1}c_{0j}^* c_{1j}\right)
\right].
\end{eqnarray}

This completes the proof.
\end{proof}

%%%%%%%%%%%%%%%%%%%%%%%%%%%%%%%%%%%%%%%%%%%%%%%%%%%%%%%%%%%%%%%%%%%%
%%%
%%% Calculation of global and local fidelities
%%%
%%%%%%%%%%%%%%%%%%%%%%%%%%%%%%%%%%%%%%%%%%%%%%%%%%%%%%%%%%%%%%%%%%%%
\section{Calculation of Global and Local Fidelities} \label{sec:calculation}

In this section, we present the calculation of the global fidelity $F^{AB}$ and the local fidelity $F^A$ for any two pure quantum states $\ket{\psi}^{AB}$ and $\ket{\phi}^{AB}$. These calculations will be used in the next section.

Let us first consider the Schmidt decomposition~\cite{Wilde2013} of the quantum state $\ket{\psi}^{AB}$, which is given by
\begin{equation} \label{eq:SDpsi} % Schmidt Decomposition of psi
\ket{\psi}^{AB} = \sqrt{\lambda} \ket{00}^{AB} + \sqrt{1-\lambda} \ket{11}^{AB}
\end{equation}
for some $\lambda\in[0, 1/2]$.
In this equation, $\{ \ket{0}^A, \ket{1}^A \}$ and $\{ \ket{0}^B, \ket{1}^B, \ldots, \ket{d-1}^B\}$ are orthonormal bases on the quantum systems $A$ and $B$, respectively.
Then, the quantum state $\ket{\phi}^{AB}$ can be represented as
\begin{equation} \label{eq:LCphi} % Linear Combination of phi
\ket{\phi}^{AB}=\sum_{i=0}^{1}\sum_{j=0}^{d-1}c_{ij}\ket{ij}^{AB},
\end{equation}
where $c_{ij}$ are complex numbers satisfying
\begin{equation}
\sum_{i=0}^{1}\sum_{j=0}^{d-1}|c_{ij}|^2=1.
\end{equation}

Given that $\ket{\psi}^{AB}$ and $\ket{\phi}^{AB}$ are pure states, $F^{AB}$ can be calculated as
\begin{eqnarray}
F^{AB}
&=& \left|\bra{\psi}^{AB} \ket{\phi}^{AB}\right|^{{2}} \\
&=& \left|\left(\sqrt{\lambda} \bra{00}^{AB} + \sqrt{1-\lambda} \bra{11}^{AB}\right)
     \left(\sum_{i=0}^{1}\sum_{j=0}^{d-1}c_{ij}\ket{ij}^{AB}\right) \right|^{{2}} \\
&=& \left| \sqrt{\lambda} c_{00} + \sqrt{1-\lambda} c_{11} \right|^{{2}}, \label{eq:simpleFAB} % simple F^{AB}
\end{eqnarray}
where the second equality arises from Eqs.~(\ref{eq:SDpsi}) and~(\ref{eq:LCphi}).
In addition, the reduced states $\rho_{\psi}^{A}$ and $\rho_{\phi}^{A}$ of the quantum states $\ket{\psi}^{AB}$ and $\ket{\phi}^{AB}$ can be represented as

%\centering %% If there is a figure in wide page, please release command \centering
\begin{eqnarray}
\rho_{\psi}^{A}
&=& \lambda \ket{0}^A\bra{0}^A + (1-\lambda) \ket{1}^A\bra{1}^A, \\
\rho_{\phi}^{A}
&=& \left(\sum_{j=0}^{d-1} |c_{0j}|^2 \right) \ket{0}^A\bra{0}^A
   +\left(\sum_{j=0}^{d-1} c_{1j}^* c_{0j}\right) \ket{0}^A\bra{1}^A
   +\left(\sum_{j=0}^{d-1} c_{0j}^* c_{1j}\right) \ket{1}^A\bra{0}^A
   +\left(\sum_{j=0}^{d-1} |c_{1j}|^2 \right) \ket{1}^A\bra{1}^A.
\end{eqnarray}
Thus, the operator $\sqrt{\rho_{\psi}^{A}} \rho_{\phi}^{A} \sqrt{\rho_{\psi}^{A}}$ is represented as

%\centering %% If there is a figure in wide page, please release command \centering
\begin{eqnarray}
\sqrt{\rho_{\psi}^{A}} \rho_{\phi}^{A} \sqrt{\rho_{\psi}^{A}}
&=& \left(\sqrt{\lambda} \ket{0}^A\bra{0}^A + \sqrt{1-\lambda} \ket{1}^A\bra{1}^A\right) \rho_{\phi}^{A} \left(\sqrt{\lambda} \ket{0}^A\bra{0}^A + \sqrt{1-\lambda} \ket{1}^A\bra{1}^A\right) \\
&=&  \lambda \bra{0}^A\rho_{\phi}^{A}\ket{0}^A \ket{0}^A\bra{0}^A
   + \sqrt{\lambda(1-\lambda)} \bra{0}^A\rho_{\phi}^{A}\ket{1}^A \ket{0}^A\bra{1}^A \\
&&   + \sqrt{(1-\lambda)\lambda} \bra{1}^A\rho_{\phi}^{A}\ket{0}^A \ket{1}^A\bra{0}^A
   + (1-\lambda) \bra{1}^A\rho_{\phi}^{A}\ket{1}^A \ket{1}^A\bra{1}^A.
\end{eqnarray}

Consider an operator ${L}$ defined as
\begin{equation}
{L}=a_{00}\ket{0}^A\bra{0}^A
 +a_{01}\ket{0}^A\bra{1}^A
 +a_{10}\ket{1}^A\bra{0}^A
 +a_{11}\ket{1}^A\bra{1}^A,
\end{equation}
wherein the coefficients $a_{ij}$ are
\begin{eqnarray}
a_{00}&=& \lambda \bra{0}^A\rho_{\phi}^{A}\ket{0}^A, \\
a_{01}&=& \sqrt{\lambda(1-\lambda)} \bra{0}^A\rho_{\phi}^{A}\ket{1}^A, \\
a_{10}&=& \sqrt{(1-\lambda)\lambda} \bra{1}^A\rho_{\phi}^{A}\ket{0}^A= a_{01}^*, \\
a_{11}&=& (1-\lambda) \bra{1}^A\rho_{\phi}^{A}\ket{1}^A.
\end{eqnarray}
In addition, let us consider an operator ${M}$ defined as
\begin{equation}
{M}=b_{00}\ket{0}^A\bra{0}^A
 +b_{01}\ket{0}^A\bra{1}^A
 +b_{10}\ket{1}^A\bra{0}^A
 +b_{11}\ket{1}^A\bra{1}^A,
\end{equation}
wherein the coefficients $b_{ij}$ are
\begin{eqnarray}
b_{00}&=& \frac{a_{00}}{a_{00}+a_{11}}, \\
b_{01}&=& \frac{a_{01}}{a_{00}+a_{11}}, \\
b_{10}&=& \frac{a_{10}}{a_{00}+a_{11}}= b_{01}^*, \\
b_{11}&=& \frac{a_{11}}{a_{00}+a_{11}}.
\end{eqnarray}
Then, ${M}$ is positive, Hermitian, and has trace 1.
Note that ${L}$ and ${M}$ satisfy the equality ${L}=(a_{00}+a_{11}){M}$.

Any operator $N$, expressed as
\begin{equation}
N=a\ket{0}\bra{0}
 +b\ket{0}\bra{1}
 +b^*\ket{1}\bra{0}
 +(1-a)\ket{1}\bra{1},
\end{equation}
that is positive, Hermitian, and has trace 1, has eigenvalues $\lambda_{\pm}$ given by
\begin{equation}
\lambda_{\pm}=\frac{1\pm \sqrt{1-4a+4a^2+4|b|^2}}{2},
\end{equation}
where $a\in[0,1]$, $b\in\mathbb{C}$, and $\ket{0}$ and $\ket{1}$ are orthonormal vectors. 
Note that $\Tr[N]=\lambda_{+}+\lambda_{-}=1$ and $\mathrm{Det}[N]=\lambda_{+}\lambda_{-}=a(1-a)-|b|^2$.

Consequently, the eigenvalues $\lambda_1$ and $\lambda_2$ of ${M}$ are calculated as
\begin{eqnarray}
\lambda_1&=&\frac{1+ \sqrt{1-4b_{00}+4b_{00}^2+4|b_{01}|^2}}{2}, \\
\lambda_2&=&\frac{1- \sqrt{1-4b_{00}+4b_{00}^2+4|b_{01}|^2}}{2},
\end{eqnarray}
and thus, the operator ${L}$ has the eigenvalues $(a_{00}+a_{11})\lambda_1$ and $(a_{00}+a_{11})\lambda_2$.
It follows that
\begin{equation}
\Tr\sqrt{\sqrt{\rho_{\psi}^{A}} \rho_{\phi}^{A} \sqrt{\rho_{\psi}^{A}}}
= \sqrt{(a_{00}+a_{11})\lambda_1} + \sqrt{(a_{00}+a_{11})\lambda_2}.
\end{equation}
Since the trace and determinant of operator ${M}$, i.e., $\Tr[{M}]=1$ and $\mathrm{Det}[{M}]=b_{00}b_{11}-|b_{01}|^2$, respectively, are known, we have
\begin{eqnarray}
&&
\left(
\Tr\sqrt{\sqrt{\rho_{\psi}^{A}} \rho_{\phi}^{A} \sqrt{\rho_{\psi}^{A}}}
\right)^2 \\
&&=(a_{00}+a_{11})\left(\lambda_1 + \lambda_2 \right)
+2\sqrt{(a_{00}+a_{11})^2 \lambda_1 \lambda_2} \\
&&=a_{00}
+2\sqrt{(a_{00}+a_{11})^2 \left( b_{00}b_{11}-|b_{01}|^2 \right)}
+a_{11} \\
&&=a_{00}
+2\sqrt{\left( a_{00}a_{11}-|a_{01}|^2 \right)}
+a_{11} \\
&&=
\lambda \bra{0}^A\rho_{\phi}^{A}\ket{0}^A
+2\sqrt{ \lambda(1-\lambda) \left(\bra{0}^A\rho_{\phi}^{A}\ket{0}^A\bra{1}^A\rho_{\phi}^{A}\ket{1}^A-\left|\bra{0}^A\rho_{\phi}^{A}\ket{1}^A \right|^2 \right)}
+(1-\lambda) \bra{1}^A\rho_{\phi}^{A}\ket{1}^A \\
&&=
\lambda \sum_{j=0}^{d-1} |c_{0j}|^2
+2\sqrt{\lambda(1-\lambda) \left(
\left(\sum_{j=0}^{d-1}|c_{0j}|^2\right)\left(\sum_{j=0}^{d-1}|c_{1j}|^2\right)
-\left(\sum_{j=0}^{d-1}c_{1j}^* c_{0j}\right)\left(\sum_{j=0}^{d-1}c_{0j}^* c_{1j}\right)
\right)}
+(1-\lambda) \sum_{j=0}^{d-1} |c_{1j}|^2 \\
&&=
\lambda \sum_{j=0}^{d-1} |c_{0j}|^2
+2\sqrt{\lambda(1-\lambda)} \sqrt{\sum_{\substack{ j,l=0 \\ j>l }}^{d-1} \left| c_{0j}c_{1l} - c_{0l}c_{1j} \right|^2 }
+(1-\lambda) \sum_{j=0}^{d-1} |c_{1j}|^2,
\end{eqnarray}
among which the last equality arises from Lemma~\ref{lem:equality} and the rest can be obtained from the definitions of the coefficients $a_{ij}$ and $b_{ij}$.
Thus, the local fidelity $F^A$ is represented as
\begin{equation}
F^A
=
\lambda \sum_{j=0}^{d-1} |c_{0j}|^2
+2\sqrt{\lambda(1-\lambda)} \sqrt{\sum_{\substack{ j,l=0 \\ j>l }}^{d-1} \left| c_{0j}c_{1l} - c_{0l}c_{1j} \right|^2 }
+(1-\lambda) \sum_{j=0}^{d-1} |c_{1j}|^2. \label{eq:simpleFA} % simple F^{A}
\end{equation}

%%%%%%%%%%%%%%%%%%%%%%%%%%%%%%%%%%%%%%%%%%%%%%%%%%%%%%%%%%%%%%%%%%%%
%%%
%%% necessary and sufficient conditions for $F^{AB} = F^{A}$
%%%
%%%%%%%%%%%%%%%%%%%%%%%%%%%%%%%%%%%%%%%%%%%%%%%%%%%%%%%%%%%%%%%%%%%%
\section{Necessary and Sufficient Conditions} \label{sec:ECs}

In this section, we present our main result, which establishes the necessary and sufficient conditions for the fidelity equality, i.e., $F^{AB}=F^A$.

\begin{Thm}[necessary and sufficient conditions] \label{thm:NSCs}
Let $\ket{\psi}^{AB}$ and $\ket{\phi}^{AB}$ be pure quantum states on a bipartite quantum system $AB$ such that $\dim A=2$ and $\dim B=d\ge2$.
The quantum states $\ket{\psi}^{AB}$ and $\ket{\phi}^{AB}$ satisfy the fidelity equality, i.e., 
\begin{equation}
F^{AB} = F^{A},
\end{equation}
if and only if they satisfy the following four conditions:
\begin{eqnarray}
\sqrt{\lambda}|c_{01}| &=& \sqrt{1-\lambda}|c_{10}|, \label{eq:EC1} \\
{
\mathrm{Re}(c_{00}c_{11}^*)}&{=}& {|c_{00}c_{11}|,} \label{eq:EC2} \\
c_{ij}&=&0,\quad \forall j \ge 2, \label{eq:EC3} \\
|c_{00}c_{11}|-|c_{01}c_{10}|&=&|c_{00}c_{11}-c_{01}c_{10}|, \label{eq:EC4}
\end{eqnarray}
wherein the notations used are the same as those used in Eqs.~(\ref{eq:SDpsi}) and~(\ref{eq:LCphi}), $k$ is real, and $p$ is real and non-negative.
\end{Thm}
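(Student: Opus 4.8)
The plan is to work directly from the closed forms already obtained, namely $F^{AB}=|\sqrt{\lambda}\,c_{00}+\sqrt{1-\lambda}\,c_{11}|^2$ in Eq.~(\ref{eq:simpleFAB}) and the expression for $F^A$ in Eq.~(\ref{eq:simpleFA}), and to study the difference $D:=F^A-F^{AB}$. First I would expand the global fidelity as $F^{AB}=\lambda|c_{00}|^2+(1-\lambda)|c_{11}|^2+2\sqrt{\lambda(1-\lambda)}\,\mathrm{Re}(c_{00}c_{11}^*)$, so that its diagonal contributions cancel against the diagonal terms implicit in $\lambda\sum_j|c_{0j}|^2+(1-\lambda)\sum_j|c_{1j}|^2$. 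Writing $X:=\sum_{j>l}|c_{0j}c_{1l}-c_{0l}c_{1j}|^2$ for the quantity under the square root in Eq.~(\ref{eq:simpleFA}), this cancellation leaves
\[
D=\lambda|c_{01}|^2+(1-\lambda)|c_{10}|^2+\lambda\sum_{j\ge2}|c_{0j}|^2+(1-\lambda)\sum_{j\ge2}|c_{1j}|^2+2\sqrt{\lambda(1-\lambda)}\left(\sqrt{X}-\mathrm{Re}(c_{00}c_{11}^*)\right).
\]

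The heart of the argument is to rewrite $D$ as a sum of manifestly non-negative pieces whose individual vanishing reproduces the four conditions. I would add and subtract $2\sqrt{\lambda(1-\lambda)}\,|c_{01}c_{10}|$: the purely local part collapses into the perfect square $(\sqrt{\lambda}|c_{01}|-\sqrt{1-\lambda}|c_{10}|)^2$, and the remaining bracket becomes $\sqrt{X}+|c_{01}c_{10}|-\mathrm{Re}(c_{00}c_{11}^*)$. To see this bracket is non-negative I would invoke the chain
\[
\sqrt{X}\ge|c_{00}c_{11}-c_{01}c_{10}|\ge|c_{00}c_{11}|-|c_{01}c_{10}|\ge\mathrm{Re}(c_{00}c_{11}^*)-|c_{01}c_{10}|,
\]
where the first step holds because $X$ contains the single $(j,l)=(1,0)$ summand $|c_{00}c_{11}-c_{01}c_{10}|^2$ alongside other non-negative terms, the second is the triangle inequality, and the third uses $|w|\ge\mathrm{Re}(w)$ with $w=c_{00}c_{11}^*$. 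Thus $D$ is a sum of four non-negative contributions, which incidentally re-derives the inequality $F^{AB}\le F^A$.

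Assuming the generic Schmidt weight $\lambda\in(0,1/2]$, so that $\lambda,\,1-\lambda,\,\sqrt{\lambda(1-\lambda)}>0$, the equality $D=0$ holds if and only if every term vanishes. The perfect square vanishes exactly when $\sqrt{\lambda}|c_{01}|=\sqrt{1-\lambda}|c_{10}|$, which is Eq.~(\ref{eq:EC1}); the two tail sums vanish exactly when $c_{ij}=0$ for all $j\ge2$, which is Eq.~(\ref{eq:EC3}). For the residual bracket I would observe that once Eq.~(\ref{eq:EC3}) holds, the only surviving summand of $X$ is the $(1,0)$ term, so $\sqrt{X}=|c_{00}c_{11}-c_{01}c_{10}|$ and the first chain inequality is tight for free; the remaining two inequalities then force $|c_{00}c_{11}|-|c_{01}c_{10}|=|c_{00}c_{11}-c_{01}c_{10}|$ and $\mathrm{Re}(c_{00}c_{11}^*)=|c_{00}c_{11}|$, i.e.\ Eqs.~(\ref{eq:EC4}) and~(\ref{eq:EC2}). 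The converse is immediate: substituting the four conditions back makes each term zero. Finally, Lemma~\ref{lem:twolemma} is what converts these extremal conditions into the multiplicative forms advertised in the statement, with part~(i) rewriting Eq.~(\ref{eq:EC2}) as $c_{11}=k\,c_{00}$ for real $k$ and part~(ii) rewriting Eq.~(\ref{eq:EC4}) as $c_{01}c_{10}=p\,c_{00}c_{11}$ for $p\ge0$.

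I expect the main obstacle to be the bookkeeping of the residual bracket: one must verify that the three-fold chain collapses to exactly the two independent conditions Eqs.~(\ref{eq:EC2}) and~(\ref{eq:EC4}) rather than three, which hinges on the observation that Eq.~(\ref{eq:EC3}) already makes the first chain inequality tight. A secondary point needing care is the degenerate boundary $\lambda=0$, where $\ket{\psi}^{AB}$ is a product state and the prefactors of several terms vanish; this case should be inspected separately so that the decomposition of $D$ into non-negative terms, and hence the ``only if'' direction, remains valid.
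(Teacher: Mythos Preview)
Your proposal is correct and takes a genuinely different, more streamlined route than the paper. The paper proceeds sequentially: it first applies the triangle inequality to $F^{AB}$, then runs a contradiction argument to establish the auxiliary fact $|c_{00}c_{11}|\ge|c_{01}c_{10}|$, and only afterwards extracts the four conditions one at a time from a chain of inequalities (Eqs.~(\ref{eq:needfourth1})--(\ref{eq:sameexplain2})). You instead write $D=F^A-F^{AB}$ once and for all as a sum of four manifestly non-negative pieces---the perfect square, the two $j\ge2$ tails, and the bracket controlled by your three-step chain---so that equality is equivalent to the simultaneous vanishing of each piece. This buys you several things the paper's argument does not: the inequality $F^{AB}\le F^A$ drops out as a byproduct rather than being assumed; the auxiliary inequality $|c_{00}c_{11}|\ge|c_{01}c_{10}|$ never needs to be isolated (it is absorbed into your triangle-inequality step and comes out automatically once Eq.~(\ref{eq:EC4}) is read off); and the converse direction is a one-line substitution rather than the separate five-line computation the paper gives. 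Your remark that Eq.~(\ref{eq:EC3}) collapses the first chain inequality to an identity, so that the bracket encodes exactly the two remaining conditions Eqs.~(\ref{eq:EC2}) and~(\ref{eq:EC4}), is precisely the bookkeeping point that makes the count of conditions come out right. Your flag on the boundary $\lambda=0$ is also well placed: there the prefactor $\sqrt{\lambda(1-\lambda)}$ kills the bracket and the $c_{0j}$ tail, so the decomposition no longer pins down Eq.~(\ref{eq:EC2}) or the $c_{0j}$ part of Eq.~(\ref{eq:EC3}); the paper handles this separable case separately in Corollary~\ref{cor:NSC} rather than inside the theorem's proof.
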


\begin{proof}
(i) Assume that the equality $F^{AB} = F^{A}$ holds.
Then, Eqs.~(\ref{eq:simpleFAB}) and~(\ref{eq:simpleFA}) imply the following equation:
\begin{equation} \label{eq:basicEq}
\left| \sqrt{\lambda} c_{00} + \sqrt{1-\lambda} c_{11} \right|^2
=
\lambda \sum_{j=0}^{d-1} |c_{0j}|^2
+2\sqrt{\lambda(1-\lambda)} \sqrt{\sum_{\substack{ j,l=0 \\ j>l }}^{d-1} \left| c_{0j}c_{1l} - c_{0l}c_{1j} \right|^2 }
+(1-\lambda) \sum_{j=0}^{d-1} |c_{1j}|^2.
\end{equation}
By applying the triangle inequality to the LHS, we obtain the following inequality:
\begin{equation} \label{eq:NSC1sub1}
2\sqrt{\lambda(1-\lambda)}\left| c_{00} c_{11} \right|
\ge
\lambda |c_{01}|^2
+2\sqrt{\lambda(1-\lambda)} \left| |c_{00}c_{11}| - |c_{01}c_{10}| \right| 
+(1-\lambda) |c_{10}|^2.
\end{equation}
If $|c_{00}c_{11}| < |c_{01}c_{10}|$ holds, then the inequality in Eq.~(\ref{eq:NSC1sub1}) becomes
\begin{equation} \label{eq:NSC1sub2}
4\sqrt{\lambda(1-\lambda)}\left| c_{00} c_{11} \right|
\ge
\lambda |c_{01}|^2
+2\sqrt{\lambda(1-\lambda)} \left| c_{01}c_{10} \right| 
+(1-\lambda) |c_{10}|^2.
\end{equation}
By applying the inequality $|c_{00}c_{11}| < |c_{01}c_{10}|$ to Eq.~(\ref{eq:NSC1sub2}), we obtain
\begin{equation}
\left( \sqrt{\lambda} |c_{01}| -\sqrt{1-\lambda} |c_{10}| \right)^2<0,
\end{equation}
which is a contradiction.
Consequently, we have the inequality
\begin{equation} \label{eq:subCondition}
|c_{00}c_{11}| \ge |c_{01}c_{10}|.
\end{equation}
By applying this inequality to Eq.~(\ref{eq:NSC1sub1}), we obtain the inequality
\begin{equation}
\left( \sqrt{\lambda} |c_{01}| -\sqrt{1-\lambda} |c_{10}| \right)^2\le0.
\end{equation}
Thus, we have demonstrated that the equality $\sqrt{\lambda} |c_{01}| = \sqrt{1-\lambda} |c_{10}|$ holds, which is the same as the first sufficient condition given as Eq.~(\ref{eq:EC1}).

Second, we note that the LHS of Eq.~(\ref{eq:basicEq}) becomes
\begin{eqnarray}
| \sqrt{\lambda} c_{00} + \sqrt{1-\lambda} c_{11} |^2
&=& \left( \sqrt{\lambda} c_{00} + \sqrt{1-\lambda} c_{11}\right)\left( \sqrt{\lambda} c_{00}^* + \sqrt{1-\lambda} c_{11}^*\right) \\
&=& \lambda |c_{00}|^2 + \sqrt{\lambda(1-\lambda)}\left( (c_{00}c_{11}^*)^* + c_{00}c_{11}^*\right) + (1-\lambda) |c_{11}|^2 \\
&=& \lambda |c_{00}|^2 + 2\sqrt{\lambda(1-\lambda)}\mathrm{Re}(c_{00}c_{11}^*) + (1-\lambda) |c_{11}|^2. \label{eq:normsqrt}
\end{eqnarray}

Therefore, the equality in Eq.~(\ref{eq:basicEq}) becomes
\begin{eqnarray}
&&2\sqrt{\lambda(1-\lambda)}\mathrm{Re}(c_{00}c_{11}^*) \label{eq:needfourth1} \\
&&=
\lambda \sum_{j\neq 0} |c_{0j}|^2
+2\sqrt{\lambda(1-\lambda)} \sqrt{\sum_{\substack{ j,l=0 \\ j>l }}^{d-1} \left| c_{0j}c_{1l} - c_{0l}c_{1j} \right|^2 }
+(1-\lambda) \sum_{j\neq 1} |c_{1j}|^2 \label{eq:needfourth2} \\
&&\ge
\lambda |c_{01}|^2
+2\sqrt{\lambda(1-\lambda)}\left| c_{00}c_{11} - c_{01}c_{10} \right|
+(1-\lambda) |c_{10}|^2 \label{eq:sameexplain1} \\
&&\ge
\lambda |c_{01}|^2
+2\sqrt{\lambda(1-\lambda)}\left| |c_{00}c_{11}| - |c_{01}c_{10}| \right|
+(1-\lambda) |c_{10}|^2 \label{eq:sameexplain2}  \\
&&=
2\sqrt{\lambda(1-\lambda)}\left| c_{00}c_{11} \right|.
\end{eqnarray}
Here, the first inequality is obtained by eliminating a few of the non-negative terms, the second inequality arises from the reverse triangle inequality, and the last equality is obtained from the inequality in Eq.~(\ref{eq:subCondition}) and the first sufficient condition Eq.~(\ref{eq:EC1}).
This implies that $\mathrm{Re}(c_{00}c_{11}^*) \ge | c_{00}c_{11} |$ holds.
Because any complex number $z$ satisfies the inequality $\mathrm{Re}(z) \le | z |$, we establish the { second sufficient condition presented in Theorem~\ref{thm:NSCs}.}

To obtain the third sufficient condition, presented as Eq.~(\ref{eq:EC3}), we use Eq.~(\ref{eq:needfourth2}) as follows:
\begin{eqnarray}
&&2\sqrt{\lambda(1-\lambda)}\mathrm{Re}(c_{00}c_{11}^*) \\
&&=
\lambda \sum_{j\neq 0} |c_{0j}|^2
+2\sqrt{\lambda(1-\lambda)} \sqrt{\sum_{\substack{ j,l=0 \\ j>l }}^{d-1} \left| c_{0j}c_{1l} - c_{0l}c_{1j} \right|^2 }
+(1-\lambda) \sum_{j\neq 1} |c_{1j}|^2 \\
&&\ge
\lambda \sum_{j\neq 0} |c_{0j}|^2
+2\sqrt{\lambda(1-\lambda)}  \left| |c_{00}c_{11}| - |c_{01}c_{10}| \right|
+(1-\lambda) \sum_{j\neq 1} |c_{1j}|^2 \\
&&=
\lambda |c_{01}|^2 +(1-\lambda) |c_{10}|^2 + \lambda \sum_{j\ge 2} |c_{0j}|^2
+2\sqrt{\lambda(1-\lambda)}  \left| |c_{00}c_{11}| - |c_{01}c_{10}| \right|
+ (1-\lambda) \sum_{j\ge2} |c_{1j}|^2 \\
&&=
2\sqrt{\lambda(1-\lambda)} |c_{01}c_{01}| + \lambda \sum_{j\ge 2} |c_{0j}|^2
+2\sqrt{\lambda(1-\lambda)}  \left| |c_{00}c_{11}| - |c_{01}c_{10}| \right|
+ (1-\lambda) \sum_{j\ge2} |c_{1j}|^2,
\end{eqnarray}
where the inequality is obtained by eliminating a few of the non-negative terms and applying the reverse triangle inequality and the last equality arises from the first sufficient condition given as Eq.~(\ref{eq:EC1}).
From Eqs.~(\ref{eq:subCondition}) and~(\ref{eq:EC2}), we have
\begin{equation}
0\ge \lambda \sum_{j\ge 2} |c_{0j}|^2 + (1-\lambda) \sum_{j\ge2} |c_{1j}|^2,
\end{equation}
which yields the third sufficient condition given as Eq.~(\ref{eq:EC3}).

By applying the first three conditions to the equality in Eq.~(\ref{eq:basicEq}), we deduce the last condition given as Eq.~(\ref{eq:EC4}).
This condition is equivalent to the fourth sufficient condition stated in Theorem~\ref{thm:NSCs}, {based on Eq.~(\ref{eq:onlyone}) of} Lemma~\ref{lem:twolemma}.

(ii) We assume the aforementioned four conditions to prove the converse of Theorem~\ref{thm:NSCs}.
Note that 
\begin{eqnarray}
F^A
&=&
\lambda \left( |c_{00}|^2 + |c_{01}|^2\right)
+2\sqrt{\lambda(1-\lambda)} \left| c_{00}c_{11} - c_{01}c_{10} \right|
+(1-\lambda) \left( |c_{10}|^2 + |c_{11}|^2\right) \\
&=&
\lambda |c_{00}|^2
+2\sqrt{\lambda(1-\lambda)} \left| c_{00}c_{11} \right|
+(1-\lambda) |c_{11}|^2 \\
&=&
\lambda |c_{00}|^2
+2\sqrt{\lambda(1-\lambda)} \mathrm{Re}(c_{00}c_{11}^*)
+(1-\lambda) |c_{11}|^2 \\
&=&
\left|\sqrt{\lambda} c_{00} + \sqrt{1-\lambda} c_{11} \right|^2 \\
&=&
F^{AB},
\end{eqnarray}
where the first equality is obtained by applying the third necessary condition given as Eq.~(\ref{eq:EC3}) to the local fidelity $F^A$ given by Eq.~(\ref{eq:simpleFA}),
the first and fourth conditions stated in Eqs.~(\ref{eq:EC1}) and~(\ref{eq:EC4}) lead to the second equality, and the third and fourth equalities arise from the second condition given as Eq.~(\ref{eq:EC2}) and from Eq.~(\ref{eq:normsqrt}), respectively.
\end{proof}

Theorem~\ref{thm:NSCs} implies the following corollary, which is nothing but the contrapositive of Theorem~\ref{thm:NSCs}.

\begin{Cor} \label{cor:NSCcontraposition}
Let $\ket{\psi}^{AB}$ and $\ket{\phi}^{AB}$ be pure quantum states on a bipartite quantum system $AB$ such that $\dim A=2$ and $\dim B=d\ge2$.
The quantum states $\ket{\psi}^{AB}$ and $\ket{\phi}^{AB}$ satisfy the fidelity inequality 
\begin{equation}
F^{AB} < F^{A}
\end{equation}
if and only if they fail to satisfy at least one of four necessary and sufficient conditions outlined in Theorem~\ref{thm:NSCs},
where $F^{AB}$ and $F^{A}$ are defined in Eqs.~(\ref{eq:SDpsi}) and~(\ref{eq:LCphi}), respectively.
\end{Cor}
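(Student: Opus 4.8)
The plan is to obtain the corollary directly from Theorem~\ref{thm:NSCs} by a purely logical argument, using the universally valid inequality $F^{AB} \le F^{A}$ recorded in Eq.~(\ref{eq:FR}). The essential observation is that, because this inequality holds for \emph{every} pair of pure bipartite states, exactly one of the two mutually exclusive alternatives $F^{AB} = F^{A}$ or $F^{AB} < F^{A}$ must occur; the case $F^{AB} > F^{A}$ is excluded a priori. Consequently the assertion $F^{AB} < F^{A}$ is logically equivalent to the negation of $F^{AB} = F^{A}$, and the entire corollary reduces to negating the biconditional proved in the theorem.

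First I would invoke Theorem~\ref{thm:NSCs}, which states that $F^{AB} = F^{A}$ holds if and only if the four conditions Eqs.~(\ref{eq:EC1})--(\ref{eq:EC4}) hold simultaneously. Taking the contrapositive of this biconditional and applying De Morgan's law to the conjunction of the four conditions, I obtain that $F^{AB} \ne F^{A}$ holds if and only if at least one of the four conditions fails. This step is immediate once one records that the right-hand side of the theorem is a conjunction, so that its negation is exactly the disjunction ``at least one condition is violated.''

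Next I would combine this with the dichotomy noted above: since $F^{AB} \le F^{A}$ always holds by Eq.~(\ref{eq:FR}), the relation $F^{AB} \ne F^{A}$ is synonymous with $F^{AB} < F^{A}$. Substituting this equivalence into the contrapositive yields precisely the claimed statement, namely that $F^{AB} < F^{A}$ holds if and only if at least one of the conditions Eqs.~(\ref{eq:EC1})--(\ref{eq:EC4}) is violated, completing the proof.

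There is no genuine analytic obstacle here; the only point requiring care is bookkeeping. One must verify that the negation of ``all four conditions hold'' is correctly rendered as ``at least one fails,'' and, more importantly, that the background inequality in Eq.~(\ref{eq:FR}) indeed forbids the strict reverse inequality, so that the negation of equality collapses cleanly to strict inequality rather than to the disjunction of $<$ and $>$. Once these two logical points are checked, no further computation is needed, which is exactly why the corollary is ``nothing but the contrapositive'' of Theorem~\ref{thm:NSCs}.
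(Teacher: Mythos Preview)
Your proposal is correct and matches the paper's approach exactly: the paper offers no separate proof at all, simply remarking that the corollary ``is nothing but the contrapositive of Theorem~\ref{thm:NSCs},'' and your argument spells out precisely that contrapositive together with the observation that Eq.~(\ref{eq:FR}) rules out $F^{AB} > F^{A}$.
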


By employing Theorem~\ref{thm:NSCs} or Corollary~\ref{cor:NSCcontraposition}, one can readily verify whether a pair of pure quantum states $\ket{\psi}^{AB}$ and $\ket{\phi}^{AB}$ satisfies the fidelity equality $F^{AB}=F^A$.
As a special case of Theorem~\ref{thm:NSCs}, if the quantum state $\ket{\psi}^{AB}$ is separable, then the four equivalence conditions are reduced to a single condition, as follows.

\begin{Cor} \label{cor:NSC}
If $\ket{\psi}^{AB}$ is separable, then the fidelity equality $F^{AB} = F^{A}$ holds if and only if the following condition holds:
\begin{equation}
c_{1j}=0,\quad \forall j\neq 1,
\end{equation}
where $c_{ij}$ is defined in Eq.~(\ref{eq:LCphi}).
\end{Cor}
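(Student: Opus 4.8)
The plan is to reduce the problem to the single boundary value $\lambda=0$ and then read off the equality condition directly from the closed forms in Eqs.~(\ref{eq:simpleFAB}) and~(\ref{eq:simpleFA}), rather than trying to specialize the four conditions of Theorem~\ref{thm:NSCs}. As I explain at the end, that specialization is exactly where subtlety lurks, so a direct computation is the cleaner route.

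First I would observe that a bipartite pure state written in the Schmidt form Eq.~(\ref{eq:SDpsi}) with $\lambda\in[0,1/2]$ is separable (a product state) if and only if its Schmidt rank equals $1$, i.e., one of the two Schmidt coefficients vanishes. Since $\lambda\le 1/2$ excludes $\lambda=1$, separability forces $\lambda=0$, so that $\ket{\psi}^{AB}=\ket{11}^{AB}$ and $\rho_{\psi}^{A}=\ket{1}^A\bra{1}^A$. This identification is the only structural input needed; everything afterward is substitution.

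Next I would insert $\lambda=0$ into the two fidelity formulas already established in Section~\ref{sec:calculation}. In Eq.~(\ref{eq:simpleFAB}) the term $\sqrt{\lambda}\,c_{00}$ drops out, giving $F^{AB}=|c_{11}|^2$. In Eq.~(\ref{eq:simpleFA}) both the first summand, which carries the prefactor $\lambda$, and the square-root cross term, which carries the prefactor $2\sqrt{\lambda(1-\lambda)}$, vanish, leaving $F^{A}=\sum_{j=0}^{d-1}|c_{1j}|^2$. Equating the two then yields $|c_{11}|^2=\sum_{j=0}^{d-1}|c_{1j}|^2$, which rearranges to $\sum_{j\neq 1}|c_{1j}|^2=0$; as a sum of non-negative terms this holds if and only if $c_{1j}=0$ for every $j\neq 1$, which is exactly the claimed condition.

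The one point demanding care, and the reason I would not route the argument through Theorem~\ref{thm:NSCs}, is that the theorem's derivation repeatedly compares against and effectively divides by the factor $\sqrt{\lambda(1-\lambda)}$, thereby tacitly assuming $0<\lambda\le 1/2$. At the boundary $\lambda=0$ the cross term disappears, and in particular the inequality used to extract Eq.~(\ref{eq:EC3}) degenerates to $0\ge\sum_{j\ge 2}|c_{1j}|^2$, which constrains only the $c_{1j}$ and leaves the $c_{0j}$ entirely free; the blanket condition $c_{ij}=0$ for all $j\ge 2$ is then no longer forced. This is precisely why the separable case collapses to the single weaker condition $c_{1j}=0$ for all $j\neq 1$, and why the clean route is the direct substitution above rather than a specialization of the four equivalent conditions.
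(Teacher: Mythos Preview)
Your proof is correct. For the converse direction it coincides exactly with the paper's argument: the paper also substitutes $\lambda=0$ into Eqs.~(\ref{eq:simpleFAB}) and~(\ref{eq:simpleFA}) to obtain $F^{AB}=|c_{11}|^{2}$ and $F^{A}=\sum_{j}|c_{1j}|^{2}$ and then reads off $F^{AB}=F^{A}$ from $c_{1j}=0$ for $j\neq 1$.

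The difference is in the forward direction. The paper invokes Theorem~\ref{thm:NSCs}: condition~(\ref{eq:EC1}) at $\lambda=0$ gives $c_{10}=0$, and condition~(\ref{eq:EC3}) gives $c_{1j}=0$ for $j\ge 2$; together these yield $c_{1j}=0$ for all $j\neq 1$. You instead set $\lambda=0$ directly in the two fidelity formulas and obtain $|c_{11}|^{2}=\sum_{j}|c_{1j}|^{2}$, which immediately forces $c_{1j}=0$ for $j\neq 1$. Your route is shorter and, more importantly, self-contained: as you point out, the derivation of~(\ref{eq:EC3}) in the proof of Theorem~\ref{thm:NSCs} at $\lambda=0$ degenerates to $0\ge\sum_{j\ge 2}|c_{1j}|^{2}$ and no longer constrains the $c_{0j}$, so the full four-condition equivalence is delicate at this boundary (indeed, choosing $c_{11}\neq 0$, some $c_{0j}\neq 0$ with $j\ge 2$, and all other coefficients zero gives $F^{AB}=F^{A}$ while violating~(\ref{eq:EC3})). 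The paper's forward argument still goes through because it extracts from Theorem~\ref{thm:NSCs} only the $c_{1j}$ consequences that do survive at $\lambda=0$, but your direct computation sidesteps the issue entirely and is the cleaner argument.
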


\begin{proof}
In Eq.~(\ref{eq:SDpsi}), if $\ket{\psi}^{AB}$ is separable, then $\lambda=0$, and thus, we have $\ket{\psi}^{AB}=\ket{11}^{AB}$.
Assuming that $F^{AB} = F^{A}$ holds, the first necessary and sufficient condition in Theorem~\ref{thm:NSCs} implies that $c_{10}=0$.
Furthermore, from the third necessary and sufficient condition in Theorem~\ref{thm:NSCs},
we have that $c_{1j}=0$ for any $j\neq1$.

For the inverse, let us assume that $c_{1j}=0$ holds for any $j\neq1$.
Note that for $\ket{\psi}^{AB}=\ket{11}^{AB}$, the global fidelity $F^{AB}$ and the local fidelity $F^{A}$ are given by
\begin{eqnarray}
F^{AB}
&=& |c_{11}|^{{2}}, \\
F^A
&=&
\sum_{j=0}^{d-1} |c_{1j}|^2,
\end{eqnarray}
which implies that $F^{AB} = F^{A}$ because $c_{1j}=0$ for any $j\neq1$.
\end{proof}

%%%%%%%%%%%%%%%%%%%%%%%%%%%%%%%%%%%%%%%%%%%%%%%%%%%%%%%%%%%%%%%%%%%%
%%%
%%% Representations of $\ket{\phi}^{AB}$ when $F^{AB} = F^{A}$
%%%
%%%%%%%%%%%%%%%%%%%%%%%%%%%%%%%%%%%%%%%%%%%%%%%%%%%%%%%%%%%%%%%%%%%%
\section{Representations for Fidelity Equality} \label{sec:Reps}

Based on the primary results presented in Sec.~\ref{sec:ECs}, we provide specific forms of the quantum state $\ket{\phi}^{AB}$ when the quantum states $\ket{\psi}^{AB}$ and $\ket{\phi}^{AB}$ satisfy $F^{AB} = F^{A}$.

If $\ket{\psi}^{AB}$ is a separable state, denoted as $\ket{\psi}^{AB}=\ket{11}^{AB}$, Corollary~\ref{cor:NSC} implies that the other quantum state $\ket{\phi}^{AB}$ is represented as follows:
\begin{equation}
\ket{\phi}^{AB}
=
c_{11}\ket{\psi}^{AB}
+\sum_{j=0}^{d-1}c_{0j}\ket{0j}^{AB},
\end{equation}
where $c_{1j}=0$ for any $j\neq1$.
This representation shows that $\ket{\phi}^{AB}$ is the linear combination of the orthogonal states $\ket{\psi}^{AB}$ and $\ket{0j}^{AB}$.
Furthermore, these states are also orthogonal to each other in subsystem $A$.
Specifically, when we consider subsystem $A$, $\ket{\psi}^{AB}$ and $\ket{0j}^{AB}$ become $\ket{1}^{A}$ and $\ket{0}^{A}$, respectively.
Therefore, in this case, the quantum states $\ket{0j}^{AB}$ have no effects on the global and local fidelities, while $\ket{\psi}^{AB}$ and its coefficient $c_{11}$ determine them, i.e., $F^{AB} = |c_{11}| = F^{A}$.

On the contrary, let us consider the case that $\ket{\psi}^{AB}$ is entangled, i.e., $\lambda\in(0,1/2]$ in Eq.~(\ref{eq:SDpsi}).
Then, the third necessary and sufficient condition of Theorem~\ref{thm:NSCs} implies that
\begin{equation} \label{eq:newRepresenforphi}
\ket{\phi}^{AB}=c_{00}\ket{00}^{AB}+c_{01}\ket{01}^{AB}+c_{10}\ket{10}^{AB}+c_{11}\ket{11}^{AB},
\end{equation}
where $|c_{00}|^2+|c_{01}|^2+|c_{10}|^2+|c_{11}|^2=1$.
From the first, second, and fourth conditions in Theorem~\ref{thm:NSCs}, along with Lemma~\ref{lem:twolemma},
the coefficients $c_{ij}$ have the following relations:
\begin{eqnarray}
c_{11}
&=&k c_{00}, \\
c_{01}
&=&r_{01}e^{i\theta_{01}}, \\
c_{10}
&=&\frac{\sqrt{\lambda}}{\sqrt{1-\lambda}}r_{10}e^{i\theta_{10}}, \\
c_{00}
&=&r_{01}\sqrt{\frac{\sqrt{\lambda}}{\sqrt{1-\lambda}}\frac{1}{pk}}e^{i(\theta_{01}+\theta_{10})/2},
\end{eqnarray}
where $k$, $\theta_{01}$, and $\theta_{10}$ are real numbers, and $p$, $r_{01}$, and $r_{10}$ are non-negative real numbers.
Thus, the quantum state $\ket{\phi}^{AB}$ in Eq.~(\ref{eq:newRepresenforphi}) becomes
\begin{equation} \label{eq:newRepresen}
\ket{\phi}^{AB}=
c_{00}\left(
\ket{00}^{AB}
+\sqrt{\frac{\sqrt{1-\lambda}}{\sqrt{\lambda}}pk}\alpha\ket{01}^{AB}
+\sqrt{\frac{\sqrt{\lambda}}{\sqrt{1-\lambda}}pk}\alpha^* \ket{10}^{AB}
+k\ket{11}^{AB}
\right),
\end{equation}
where the coefficient $\alpha$ is a complex number defined as $e^{i(\theta_{01}-\theta_{10})/2}$.

\begin{Rem}
The coefficient $p$ in the representation of the quantum state $\ket{\phi}^{AB}$ in Eq.~(\ref{eq:newRepresen}) determines its entanglement properties.
Specifically, $\ket{\phi}^{AB}$ given by {Eq.~(\ref{eq:newRepresenforphi})} is separable if and only if $c_{00}c_{11}=c_{01}c_{10}$ holds.
Therefore, $\ket{\phi}^{AB}$ of Eq.~(\ref{eq:newRepresen}) is separable if and only if $p=1$.
Consequently, for the case of $p=1$, the representation in Eq.~(\ref{eq:newRepresen}) simplifies to
\begin{equation}
\ket{\phi}^{AB}=
c_{00}\left(
\ket{0}^{A}
+\sqrt{\frac{\sqrt{\lambda}}{\sqrt{1-\lambda}}k}\alpha^*\ket{1}^{A}\right)
\otimes
\left(
\ket{0}^{B}
+\sqrt{\frac{\sqrt{1-\lambda}}{\sqrt{\lambda}}k}\alpha \ket{1}^{B}
\right).
\end{equation}
\end{Rem}

%%%%%%%%%%%%%%%%%%%%%%%%%%%%%%%%%%%%%%%%%%%%%%%%%%%%%%%%%%%%%%%%%%%%
%%%
%%% Conclusion
%%%
%%%%%%%%%%%%%%%%%%%%%%%%%%%%%%%%%%%%%%%%%%%%%%%%%%%%%%%%%%%%%%%%%%%%
\section{Conclusions} \label{sec:conclusion}
In this study, we have explored quantum fidelity and its fundamental properties.
Specifically, we have focused on bipartite pure quantum states $\ket{\psi}^{AB}$ and $\ket{\phi}^{AB}$,
where the dimension of quantum system $A$ is two and the dimension of system $B$ is arbitrary.
We have introduced the global fidelity $F^{AB}$ and the local fidelity $F^{A}$ for these quantum states in Sec.~\ref{sec:preliminary}.
We have established the inequality $F^{AB}\le F^{A}$ but the conditions under which these fidelities are equal remained unknown.
In Sec.~\ref{sec:ECs}, we have provided the necessary and sufficient conditions for the fidelity equality $F^{AB}=F^A$.
Additionally, in Sec.~\ref{sec:Reps}, we have presented specific representations of the quantum state $\ket{\phi}^{AB}$ when $F^{AB}=F^A$ is satisfied by $\ket{\psi}^{AB}$ and $\ket{\phi}^{AB}$.

In this study, our analysis was based on the assumption that the bipartite quantum states for calculating quantum fidelities are pure, and we have considered a fixed dimension of two for subsystem $A$.
However, for future research, we propose investigating the necessary and sufficient conditions for fidelity equality in general bipartite states.
Moreover, it would be valuable to explore the relationships between  the amount of entanglement and fidelity equality, as quantum entanglement plays a crucial role in quantum communication tasks, although our current work does not focus on it.
To the best of our knowledge, there is a lack of research addressing the connection between entanglement and fidelity equality.
Therefore, elucidating these relationships would contribute significantly to the field.
Additionally, we suggest examining a specific scenario in which one of our target states corresponds to the the isotropic state~\cite{Horodecki1999} or the Werner state~\cite{Werner89}.

%%%%%%%%%%%%%%%%%%%%%%%%%%%%%%%%%%%%%%%%%%%%%%%%%%%%%%%%%%%%%%%%%%%%%%%%%%%%%%%%%%%%%%%%
%%%
%%% Acknowledgment
%%%
%%%%%%%%%%%%%%%%%%%%%%%%%%%%%%%%%%%%%%%%%%%%%%%%%%%%%%%%%%%%%%%%%%%%%%%%%%%%%%%%%%%%%%%%
\section*{ACKNOWLEDGMENTS}
%\begin{acknowledgments}
This research was supported by Basic Science Research Program through the National Research Foundation of Korea funded by the Ministry of Education (Grant No. NRF-2020R1I1A1A01058364).
%\end{acknowledgments}

%%%%%
%

%\clearpage
%\cleardoublepage{}

\end{document}